\documentclass{article}
\usepackage{abstract}
\providecommand{\keywords}[1]{\textbf{Keywords:} #1}

\usepackage{arxiv}

\usepackage[utf8]{inputenc}
\usepackage[T1]{fontenc}
\usepackage{hyperref}
\usepackage{url}
\usepackage{booktabs}
\usepackage{amsfonts}
\usepackage{amsmath}
\usepackage{amssymb}
\usepackage{amsthm}
\usepackage{nicefrac}
\usepackage{microtype}
\usepackage{graphicx}
\usepackage{natbib}
\usepackage{doi}
\usepackage{tikz}
\usetikzlibrary{arrows.meta, positioning, calc, shapes.geometric, fit, backgrounds}
\usepackage{xcolor}
\usepackage{array}
\usepackage{caption}
\usepackage{subcaption}
\usepackage{algorithm}
\usepackage{algpseudocode}
\usepackage{enumitem}
\usepackage{tabularx}
\usepackage{multirow}

\tikzset{
  gridcell/.style={
    draw=black, thick,
    minimum width=1.0cm, minimum height=1.0cm,
    rounded corners=4pt,
    font=\ttfamily\bfseries\large,
    align=center
  },
  emptyCell/.style={gridcell, fill=gray!8},
  instrCell/.style={gridcell, fill=blue!15},
  ctrlCell/.style={gridcell, fill=orange!25},
  haltCell/.style={gridcell, fill=red!20},
  waitCell/.style={gridcell, fill=gray!15},
  ipArrow/.style={-{Stealth[scale=1.4]}, red, very thick},
  loopArrow/.style={-{Stealth[scale=1.2]}, purple!70!black, thick, dashed},
  labelStyle/.style={font=\small\sffamily, text=gray!70!black},
  coordLabel/.style={font=\tiny\ttfamily, text=gray!60}
}

\theoremstyle{definition}
\newtheorem{definition}{Definition}[section]
\newtheorem{example}{Example}[section]
\theoremstyle{plain}
\newtheorem{theorem}{Theorem}[section]
\newtheorem{proposition}{Proposition}[section]

\newtheorem{corollary}{Corollary}[section]
\theoremstyle{remark}
\newtheorem{remark}{Remark}[section]

\newcommand{\GP}{\mathcal{P}}           
\newcommand{\Dom}{\mathcal{D}}          
\newcommand{\Inst}{\mathcal{I}}         
\newcommand{\Val}{\mathcal{V}}          
\newcommand{\ZZ}{\mathbb{Z}}            
\newcommand{\RR}{\mathbb{R}}            
\newcommand{\dir}{\mathit{dir}}
\newcommand{\ip}{\mathit{ip}}
\newcommand{\push}{\mathsf{push}}
\newcommand{\pop}{\mathsf{pop}}
\newcommand{\IP}{\mathit{IP}}
\newcommand{\AS}{\mathit{AS}}
\newcommand{\DS}{\mathit{DS}}
\newcommand{\CDLL}{\mathit{CDLL}}
\newcommand{\Prim}{\mathit{prim}}
\newcommand{\Sec}{\mathit{sec}}
\newcommand{\Ter}{\mathit{ter}}
\newcommand{\instr}[1]{\texttt{#1}}

\title{Grid Programs: A Two-Dimensional, Variable-Free Model of Computation}

\author{ \href{https://orcid.org/0000-0001-8231-5687}{\includegraphics[scale=0.06]{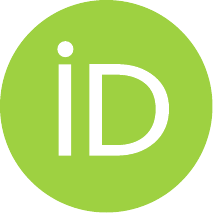}\hspace{1mm}Ezequiel L\'opez-Rubio}\thanks{Corresponding author. ITIS Software. Universidad de M\'alaga. C/ Arquitecto Francisco Peñalosa 18, 29010, Málaga, Spain} \\
	Department of Computer Languages and Computer Science\\
    University of M\'alaga\\
    Bulevar Louis Pasteur, 35\\
    29071 M\'alaga, Spain \\
	\texttt{ezeqlr@lcc.uma.es} \\
}

\renewcommand{\shorttitle}{Grid Programs}
\renewcommand{\headeright}{A Preprint}

\hypersetup{
  pdftitle={Grid Programs: A Two-Dimensional, Variable-Free Model of Computation},
  pdfsubject={cs.PL, cs.CC, cs.FL},
  pdfauthor={Ezequiel L\'opez-Rubio},
  pdfkeywords={models of computation, two-dimensional programs, visual programming, Turing completeness, stack-based computation},
}

\begin{document}
\maketitle

\begin{abstract}
We introduce \emph{Grid Programs}, a novel model of computation in which programs are finite two-dimensional arrangements of instructions on an integer grid rather than linear sequences of statements. Three properties distinguish this model fundamentally from classical frameworks: (i) programs are planar structures through which an instruction pointer moves in the four cardinal directions; (ii) there are no syntax constraints—any assignment of instructions to grid cells constitutes a valid program; and (iii) the model uses no named variables or explicit memory addresses. Program state is maintained through a data stack, an address stack, and a circularly doubly linked list accessed via three named pointers. Control flow is achieved spatially, with branching encoded as perpendicular turns of the instruction pointer. The address stack stores triplets (cell row, cell column, direction), enabling precise restoration of both position and heading after branches, loops, and function calls. We give a formal operational semantics, present a representative instruction set covering arithmetic, control flow, and linked-list manipulation, and work through several detailed examples, including an absolute-value function, a factorial computation, a linear-search algorithm, a string-reversal program, and a while-loop summation. We establish that Grid Programs are Turing-complete by simulating an arbitrary register machine, and we discuss their relationship to prior two-dimensional languages such as Befunge and Funge-98, to stack-based languages such as Forth and PostScript, and to dataflow and spatial computation models. Grid Programs offer a fresh vantage point for exploring the design space of computation, with potential applications in visual programming environments, cellular-automaton-inspired hardware, and obfuscation-resistant code.
\end{abstract}

\keywords{models of computation \and two-dimensional programs \and Turing completeness \and stack-based computation \and visual programming \and spatial instruction pointer \and function calls \and control structure nesting}

\section{Introduction}

Most programming languages, from assembly to Python, share a fundamental structural assumption: a program is a \emph{linear} sequence of instructions, and the flow of control is essentially one-dimensional. Even when execution branches or loops, the program text is a list that is read left-to-right and top-to-bottom. This linearity is so ingrained that it is rarely questioned. Yet there is no theoretical necessity for it: the underlying mathematics of computation do not require linearity.

In this paper we propose a genuinely two-dimensional model of computation that we call \emph{Grid Programs}. A Grid Program is a finite subset of the integer plane $\ZZ^2$, with each cell labeled by an instruction from a fixed finite set. Execution proceeds by moving an \emph{instruction pointer} (IP) across the grid in one of the four cardinal directions (left, right, up, down). Control-flow instructions change the direction of travel, branch to perpendicular paths, or loop back to earlier cells—all without any named labels or explicit jump addresses. Variables do not exist in the model; instead, working memory is provided by a data stack (DS), an address stack (AS), and a \emph{circularly doubly linked list} (CDLL) accessed through three named pointers.

The model has three properties that are individually known but have not previously been combined:

\begin{enumerate}[label=(\alph*)]
  \item \textbf{Planar programs.} Instructions occupy cells of the two-dimensional integer lattice. The IP moves in the plane rather than along a line.
  \item \textbf{No syntax.} Every function $f: \Dom \to \Inst$ from a finite subset $\Dom \subset \ZZ^2$ to the instruction set is a valid Grid Program. There are no parse errors, no reserved keywords that must appear in special positions, and no grammatical rules.
  \item \textbf{No variables.} The model offers neither named variables nor explicit memory addresses. All data manipulation goes through the DS, the AS, and the CDLL.
\end{enumerate}

\paragraph{Contributions.}
The main contributions of this paper are:
\begin{itemize}
  \item A formal definition of Grid Programs, including operational semantics for the complete instruction set (Section~\ref{sec:formal}).
  \item Five detailed worked examples that demonstrate how familiar algorithms—absolute value, sum of integers, factorial, linear search, and string reversal—are expressed as grid programs (Section~\ref{sec:examples}).
  \item A formal definition of the \instr{K} (call) instruction, which implements structured function calls via the address stack (Section~\ref{sec:formal}).
  \item A discussion of arbitrary nesting of control structures (Section~\ref{sec:nesting}).
  \item A proof that Grid Programs are Turing-complete (Section~\ref{sec:turing}).
  \item A comparative survey of related two-dimensional, stack-based, and unconventional computation models (Section~\ref{sec:related}).
  \item A discussion of expressiveness, design variants, and open problems (Section~\ref{sec:discussion}).
\end{itemize}

\paragraph{Organisation.}
Section~\ref{sec:overview} provides an informal overview of the model. Section~\ref{sec:formal} gives the formal definition and operational semantics. Section~\ref{sec:examples} presents five worked examples with step-by-step traces. Section~\ref{sec:nesting} shows that all control structures nest arbitrarily. Section~\ref{sec:turing} proves Turing completeness. Section~\ref{sec:related} surveys related work. Section~\ref{sec:discussion} discusses extensions and open questions, and Section~\ref{sec:conclusion} concludes.

\section{Informal Overview}
\label{sec:overview}

\subsection{Programs as Grids}

Imagine a sheet of graph paper. Each square cell may hold a single instruction symbol. The program is a finite populated region of that grid, which need not be connected. Execution begins at the cell labelled $(0,0)$, with the IP initially pointing \emph{upward}. At each step the IP reads the instruction in its current cell, performs the associated action, and advances—unless the instruction explicitly redirects it.

Figure~\ref{fig:overview_concept} illustrates the idea. The grid on the left shows a program occupying a roughly $L$-shaped region; the red arrow indicates the IP's current position and direction. As the IP moves, it may encounter \instr{T} (turn) instructions that rotate its heading by $90^\circ$, $180^\circ$, or $270^\circ$; \instr{F} (if) instructions that branch perpendicularly; or \instr{R}/\instr{U} instructions that implement a structured repeat/until loop.

\begin{figure}[ht]
  \centering
  \includegraphics[width=15cm]{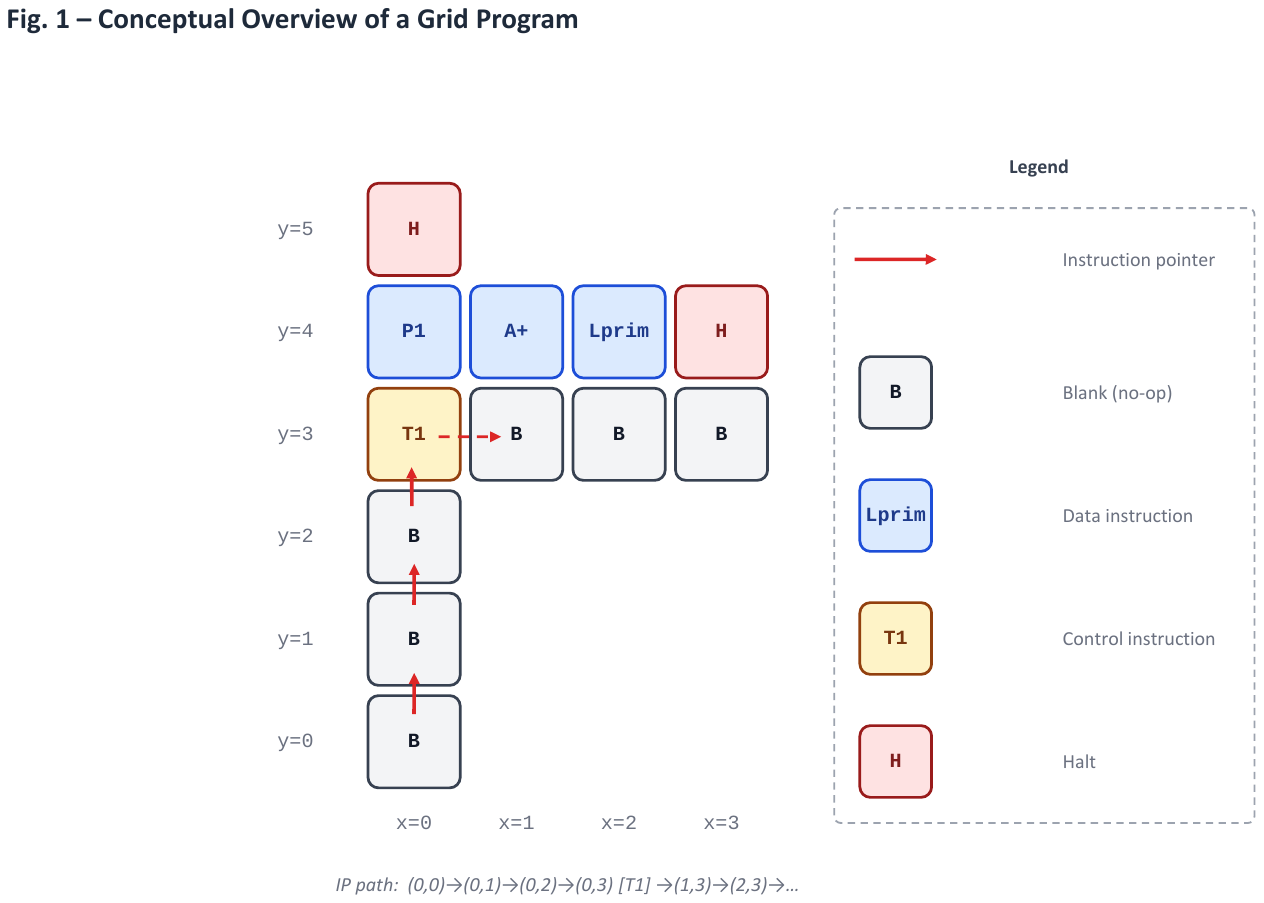}
  \caption{Conceptual view of a Grid Program. Each labeled cell holds one instruction. The red arrow indicates the instruction pointer and its current direction of travel. The IP started at $(0,0)$ (bottom-left of the occupied region), moving upward, executed several \instr{B} (blank, no-op) instructions, then encountered a \instr{T1} (turn right) instruction and is now moving rightward.}
  \label{fig:overview_concept}
\end{figure}

\subsection{Data Model}

The program state carries four data structures:
\begin{description}
  \item[\textbf{Data Stack (DS).}] A standard last-in-first-out stack of values from $\Val = \mathbb{B} \cup \ZZ \cup \RR \cup \Sigma^*$, covering Boolean, integer, real, and string data. Arithmetic and logic operations consume their operands from DS and push results back onto DS, in postfix (reverse-Polish) fashion.
  \item[\textbf{Address Stack (AS).}] A stack of \emph{triplets} $(x, y, d) \in \ZZ^2 \times \mathcal{C}$, where $(x,y)$ is a grid cell address and $d \in \mathcal{C}$ is a direction. Every control-flow instruction that saves a position also saves the current direction; every pop restores both address and direction simultaneously. This ensures that after a branch, loop, or function return, the IP resumes not only at the correct cell but also moving in the correct direction.
  \item[\textbf{Circularly Doubly Linked List (CDLL).}] An unbounded, circular, doubly linked list of values, providing random-access-like storage without named addresses. Three named pointers—$\Prim$, $\Sec$, and $\Ter$—each point to some node of the CDLL. Instructions load and store values through these pointers, insert and delete nodes, move a pointer to the location pointed to by another pointer, and copy values between the pointed-to nodes.
\end{description}

The combination of DS (for expression evaluation), AS (for control flow bookkeeping), and CDLL (for structured storage) gives the model the expressive power of an arbitrary Turing machine, as we prove in Section~\ref{sec:turing}.

\subsection{Control Flow by Direction}

The most distinctive feature of Grid Programs is how control flow is represented spatially. In a conventional language, an \texttt{if} statement contains two syntactic sub-sequences: the \texttt{then}-branch and the \texttt{else}-branch. In a Grid Program, the \instr{F} instruction sends the IP along a \emph{perpendicular} path depending on whether the top of DS is true or false:

\begin{itemize}
  \item If the condition is \textit{true}, the IP turns $90^\circ$ anticlockwise (e.g., if it was heading right, it now heads upward) and enters the \texttt{then}-branch.
  \item If the condition is \textit{false}, the IP turns $90^\circ$ clockwise (e.g., if heading right, it now heads downward) and enters the \texttt{else}-branch.
\end{itemize}

Both branches terminate with an \instr{E} (end) instruction, which pops the saved address from AS and resumes execution after the branching point. Loops are handled similarly: \instr{R} (repeat) saves the next cell's address on AS and begins the loop body; \instr{U} (until) checks the top of DS and either jumps back to the saved address (condition false) or advances and pops the address (condition true).

\paragraph{While loops (\instr{W}).}
The instruction \instr{W} (While) implements a pre-tested loop: it first checks a condition on the top of DS, then either exits or enters the loop body. Specifically:
\begin{enumerate}[label=(\roman*)]
  \item Pop $c$ from DS.
  \item \textbf{If $c = 0$ (false):} the loop condition is not met; advance the IP one step in $\dir$ (exiting the loop body region), leaving the body unexecuted.
  \item \textbf{If $c \neq 0$ (true):} push $(\IP, \dir)$---the address and direction of the \instr{W} cell itself---onto AS; rotate $\dir$ clockwise by $90^\circ$; move the IP to the first cell of the loop body (which lies in the new direction).
\end{enumerate}
After the loop body, an \instr{E} (end) instruction pops $(\IP_W, d_W)$ from AS and restores both position and direction, causing the IP to arrive back at the \instr{W} cell. The condition is re-evaluated, and the loop continues or terminates.

This design differs from the \instr{R}/\instr{U} (repeat\ldots until) construct in two ways: (i) the condition is tested \emph{before} each iteration rather than after, and (ii) a zero value exits the loop, making the condition a ``continue while non-zero'' test. The body of a \instr{W} loop is entered only if the condition is non-zero; the body of \instr{R}/\instr{U} is always executed at least once.

\paragraph{Function calls (\instr{K}).}
The instruction \instr{K} (Kall) performs a structured \emph{function call}. Executing \instr{K} does three things in sequence:
\begin{enumerate}[label=(\roman*)]
  \item push $\text{next}(\IP,\dir)$---the address of the cell immediately following \instr{K} in the current direction---onto AS (this is the \emph{return address});
  \item pop three values $d$, $y$, and $x$ from DS (top-to-bottom: $d$ on top, then $y$, then $x$), where $d \in \mathcal{C}$ is the entry direction of the function and $(x,y) \in \ZZ^2$ is the target address; set $\dir \leftarrow d$;
  \item move the IP to $(x, y)$.
\end{enumerate}
The called function body finishes with \instr{E} (end), which pops the return address from AS and resumes execution at the cell after \instr{K}. Because AS is a stack, calls may be nested to arbitrary depth and may be \emph{recursive}: each active call frame occupies exactly one entry on AS. The caller places the target coordinates and entry direction on DS immediately before \instr{K} (order on DS, bottom to top: $x$, $y$, $d$); the callee does not need to know where it was called from. The stored direction in AS ensures that when \instr{E} returns, the IP resumes moving in the direction the caller was travelling, not in the callee's entry direction.

This spatial encoding means that a single grid may contain multiple interleaved control structures simply by laying them out in orthogonal directions. Sequences of blank and turn cells may be employed as required to separate the control structures. The two-dimensional medium naturally separates control paths that would be tangled in a one-dimensional text.

\section{Formal Definition}
\label{sec:formal}

\subsection{Grid Programs}

\begin{definition}[Domain]
\label{def:domain}
A \emph{domain} is a finite subset $\Dom \subseteq \ZZ^2$ that contains the origin $(0,0)$. The domain need not be connected; cells that are not reachable by the IP during execution are simply never visited.
\end{definition}

\begin{definition}[Instruction set]
\label{def:instrset}
The \emph{instruction set} $\Inst$ is the finite set listed in Table~\ref{tab:instructions}.
\end{definition}

\begin{definition}[Grid Program]
\label{def:gridprogram}
A \emph{Grid Program} is a pair $\GP = (\Dom, f)$ where $\Dom$ is a domain and $f: \Dom \to \Inst$ is an instruction function that assigns an instruction to each cell of $\Dom$.
\end{definition}

No structural or syntactic constraint is imposed on $f$ beyond its type. Every such pair is a valid program.

\subsection{Values and State}

\begin{definition}[Value set]
The \emph{value set} is $\Val = \mathbb{B} \cup \ZZ \cup \RR \cup \Sigma^*$, where $\mathbb{B} = \{\mathsf{True}, \mathsf{False}\}$, $\ZZ$ is the set of integers, $\RR$ is the set of reals, and $\Sigma^*$ is the set of finite strings over a fixed alphabet $\Sigma$.
\end{definition}

A \emph{direction} is an element of $\mathcal{C} = \{\mathsf{left}, \mathsf{right}, \mathsf{up}, \mathsf{down}\}$. We define the unit step associated with a direction:
\[
  \delta(\mathsf{right}) = (1,0),\quad \delta(\mathsf{left}) = (-1,0),\quad
  \delta(\mathsf{up}) = (0,1),\quad \delta(\mathsf{down}) = (0,-1).
\]
The clockwise rotation operator $\rho$ and anticlockwise rotation $\rho^{-1}$ act on $\mathcal{C}$:
\[
  \rho: \mathsf{up}\mapsto\mathsf{right}\mapsto\mathsf{down}\mapsto\mathsf{left}\mapsto\mathsf{up}.
\]

\begin{definition}[Program state]
\label{def:state}
A \emph{program state} is a tuple
\[
  s = (\IP,\, \dir,\, \AS,\, \DS,\, \CDLL,\, \Prim,\, \Sec,\, \Ter)
\]
where:
\begin{itemize}
  \item $\IP \in \ZZ^2$ is the current instruction-pointer position,
  \item $\dir \in \mathcal{C}$ is the current direction,
  \item $\AS$ is a stack of triplets $(x, y, d) \in \ZZ^2 \times \mathcal{C}$,
  \item $\DS$ is a stack of elements of $\Val$,
  \item $\CDLL$ is a finite, non-empty, circularly doubly linked list of elements of $\Val$, and
  \item $\Prim, \Sec, \Ter$ are references to (not necessarily distinct) nodes of $\CDLL$.
\end{itemize}
\end{definition}

\begin{definition}[Initial state]
\label{def:initstate}
Given a Grid Program $\GP$ and an input sequence $(a_1, \ldots, a_n) \in \Val^*$, the \emph{initial state} is
\[
  s_0 = \bigl((0,0),\; \mathsf{up},\; \varnothing,\; \varnothing,\; [a_1, a_2, \ldots, a_n],\; \Prim{=}a_1,\; \Sec{=}a_1,\; \Ter{=}a_1\bigr)
\]
where $\varnothing$ denotes an empty stack, and all three CDLL pointers are initialised to the first node $a_1$.
\end{definition}

\subsection{Instruction Set and Operational Semantics}
\label{sec:semantics}

Table~\ref{tab:instructions} lists every instruction in $\Inst$. We write $\text{top}(\DS)$ for the top element of the data stack and $\text{next}(\IP, \dir) = \IP + \delta(\dir)$ for the next cell in the current direction.

\begin{table}[ht]
\centering
\caption{Complete instruction set for Grid Programs. The column \emph{Mnemonic} shows the symbol written in the grid cell.}
\label{tab:instructions}
\renewcommand{\arraystretch}{1.25}
\begin{tabularx}{\textwidth}{lllX}
\toprule
\textbf{Mnemonic} & \textbf{Name} & \textbf{Category} & \textbf{Description} \\
\midrule
\instr{B} & Blank & Control & No operation; IP advances by one step in direction $\dir$. \\
\instr{W} & While & Control & Pop $c$ from DS. If $c = 0$ (false): advance IP one step in $\dir$. If $c \neq 0$ (true): push $(\IP, \dir)$ onto AS; rotate $\dir$ clockwise by $90^\circ$; move IP to $\text{next}(\IP, \dir_{\text{new}})$. \\
\instr{H} & Halt & Control & Execution terminates; top of DS (if any) is the output. \\
\instr{R} & Repeat & Control & Push $(\text{next}(\IP,\dir), \dir)$ onto AS; move IP to that cell. \\
\instr{U} & Until & Control & If DS non-empty and $\text{top}(\DS)$ is falsy: move IP to $\text{top}(\AS)$ (do not pop). Otherwise: advance IP one step in $\dir$; if AS non-empty, pop AS. \\
\instr{F} & If & Control & Push $(\text{next}(\IP,\dir), \dir)$ onto AS. If DS non-empty and $\text{top}(\DS)$ truthy: move IP to $\text{next}(\IP,\rho^{-1}(\dir))$ (anticlockwise turn). Otherwise: move IP to $\text{next}(\IP,\rho(\dir))$ (clockwise turn). \\
\instr{E} & End/Return & Control & Pop $(a, d)$ from AS (if non-empty); set $\dir \leftarrow d$; move IP to $a$. \\
\instr{K} & Kall & Control & Push $(\text{next}(\IP,\dir), \dir)$ onto AS (return address + current direction). Pop $d$, $y$, then $x$ from DS; set $\dir \leftarrow d$; move IP to $(x, y)$. \\
\instr{Ao} & Arith/Logic & Data & Perform operation $o$ on top element(s) of DS; pop operands, push result. \\
\instr{Ly} & Load & Data & Push value at CDLL node pointed by $y \in \{\Prim,\Sec,\Ter\}$ onto DS. \\
\instr{Sy} & Store & Data & Pop top of DS; write it to the CDLL node pointed by $y$. \\
\instr{Iy} & Insert & Data & Insert a new node with value $0$ immediately after the node pointed by $y$; $y$ is updated to point to the new node. \\
\instr{Dy} & Delete & Data & If $|\CDLL|>1$: delete the node pointed by $y$; $y$ advances to the next node. \\
\instr{Mxy} & Move pointer & Data & Set pointer $x$ to point to the same node as pointer $y$. \\
\instr{Cxy} & Copy & Data & Write the value at pointer $x$ to the node pointed by pointer $y$. \\
\instr{Nxy} & Next & Data & Advance pointer $x$ by one step in CDLL traversal direction; $y \in \{+,-\}$ selects forward/backward. \\
\instr{Pk} & Push const & Data & Push constant $k \in \{0, 1, e, \pi\}$ onto DS. \\
\instr{X} & Pop & Data & If DS non-empty: pop the top element. \\
\instr{Tz} & Turn & Control & Rotate $\dir$ clockwise by $z \times 90^\circ$, where $z \in \{1,2,3\}$; advance IP one step in new direction. \\
\bottomrule
\end{tabularx}
\end{table}

\paragraph{Arithmetic/logic operations.}
The \instr{Ao} instruction parameterises over a standard set of operations $o$, which includes:
\[
  \{+,\,-,\,\times,\,\div,\,\mathsf{mod},\,\mathsf{neg},\,\mathsf{abs},\,\mathsf{pow},\,\mathsf{sqrt},\,\mathsf{floor},\,\mathsf{ceil},\,
   =,\,\neq,\,<,\,\leq,\,>,\,\geq,\,\mathsf{and},\,\mathsf{or},\,\mathsf{not},\,\mathsf{concat},\,\mathsf{len},\ldots\}
\]
Binary operations consume two values from DS (second-from-top is the left operand, top is the right operand) and push one result.

\paragraph{Advance function.}
Let $\text{adv}(s) = s'$ denote the state obtained by executing the instruction at $\IP$ in state $s$. The full transition relation is specified instruction by instruction in Table~\ref{tab:transitions}.

\begin{table}[ht]
\centering
\caption{Operational semantics: selected transition rules. $\ip' = \IP + \delta(\dir)$ is the default next cell. $\text{top}(s)$ denotes $\text{top}(\DS)$ in state $s$.}
\label{tab:transitions}
\renewcommand{\arraystretch}{1.3}
\begin{tabular}{lll}
\toprule
\textbf{Instruction} & \textbf{Precondition} & \textbf{Effect on state} \\
\midrule
\instr{B} & — & $\IP \leftarrow \ip'$ \\
\instr{W} & $\DS \neq \varnothing$, $c = \text{top}(\DS) = 0$ & pop $\DS$;\ $\IP \leftarrow \ip'$ \\
\instr{W} & $\DS \neq \varnothing$, $c = \text{top}(\DS) \neq 0$ & pop $\DS$;\ $\AS \leftarrow \push(\AS, (\IP, \dir))$;\ $\dir \leftarrow \rho(\dir)$;\ $\IP \leftarrow \IP+\delta(\dir) $ \\
\instr{H} & — & terminate \\
\instr{R} & — & $\AS \leftarrow \push(\AS, (\ip', \dir))$;\ $\IP \leftarrow \ip'$ \\
\instr{U} & $\DS \neq \varnothing$, $\text{top}(\DS)$ falsy & $(a,d) \leftarrow \text{top}(\AS)$;\ $\dir \leftarrow d$;\ $\IP \leftarrow a$ \\
\instr{U} & $\DS = \varnothing$ or $\text{top}(\DS)$ truthy & $\IP \leftarrow \ip'$;\ $\AS \leftarrow \pop(\AS)$ \\
\instr{F} & $\DS \neq \varnothing$, $\text{top}(\DS)$ truthy & $\AS \leftarrow \push(\AS, (\ip', \dir))$;\ $\IP \leftarrow \IP + \delta(\rho^{-1}(\dir))$ \\
\instr{F} & $\DS = \varnothing$ or $\text{top}(\DS)$ falsy & $\AS \leftarrow \push(\AS, (\ip', \dir))$;\ $\IP \leftarrow \IP + \delta(\rho(\dir))$ \\
\instr{E} & $\AS \neq \varnothing$ & $(a,d) \leftarrow \text{top}(\AS)$;\ $\AS \leftarrow \pop(\AS)$;\ $\dir \leftarrow d$;\ $\IP \leftarrow a$ \\
\instr{K} & $|\DS| \geq 3$ & $\AS \leftarrow \push(\AS, (\ip^{\prime}, \dir))$; pop $d$ then $y$ then $x$ from $\DS$; $\dir \leftarrow d$;\ $\IP \leftarrow (x, y)$ \\
\instr{Ao} & operands on DS & pop operands; push $o(\text{operands})$;\ $\IP \leftarrow \ip'$ \\
\instr{Ly} & — & $\DS \leftarrow \push(\DS, \CDLL[y])$;\ $\IP \leftarrow \ip'$ \\
\instr{Sy} & $\DS \neq \varnothing$ & $\CDLL[y] \leftarrow \text{top}(\DS)$;\ $\DS \leftarrow \pop(\DS)$;\ $\IP \leftarrow \ip'$ \\
\instr{Pk} & — & $\DS \leftarrow \push(\DS, k)$;\ $\IP \leftarrow \ip'$ \\
\instr{X} & $\DS \neq \varnothing$ & $\DS \leftarrow \pop(\DS)$;\ $\IP \leftarrow \ip'$ \\
\instr{Tz} & — & $\dir \leftarrow \rho^z(\dir)$;\ $\IP \leftarrow \IP + \delta(\rho^z(\dir))$ \\
\bottomrule
\end{tabular}
\end{table}

\begin{remark}[Robustness to undefined states]
If the IP steps off the boundary of $\Dom$ (i.e., $\IP + \delta(\dir) \notin \Dom$), the program is considered to have halted abnormally. A well-designed grid program ensures the IP always remains within $\Dom$, typically by enclosing active regions with \instr{B} or \instr{H} cells that prevent accidental escape.
\end{remark}

\section{Worked Examples}
\label{sec:examples}

We present five programs of increasing complexity. For each program we (i)~display the grid with annotated cell coordinates, (ii)~explain the layout and control-flow strategy, and (iii)~trace the execution on a representative input.

Throughout this section, the IP begins at $(0,0)$ pointing \emph{upward} ($\dir=\mathsf{up}$), i.e., toward increasing~$y$, unless otherwise stated. Notation: $\ip' = \text{next}(\IP,\dir)$ is the cell immediately ahead of the IP in its current direction.

\subsection{Absolute Value}
\label{sec:abs}

\begin{example}[Absolute value]
\label{ex:abs}
Compute $|n|$ for an integer $n$ given on top of DS. The IP starts at $(0,0)$ moving upward.
\end{example}

\begin{figure}[ht]
\centering
\includegraphics[width=15cm]{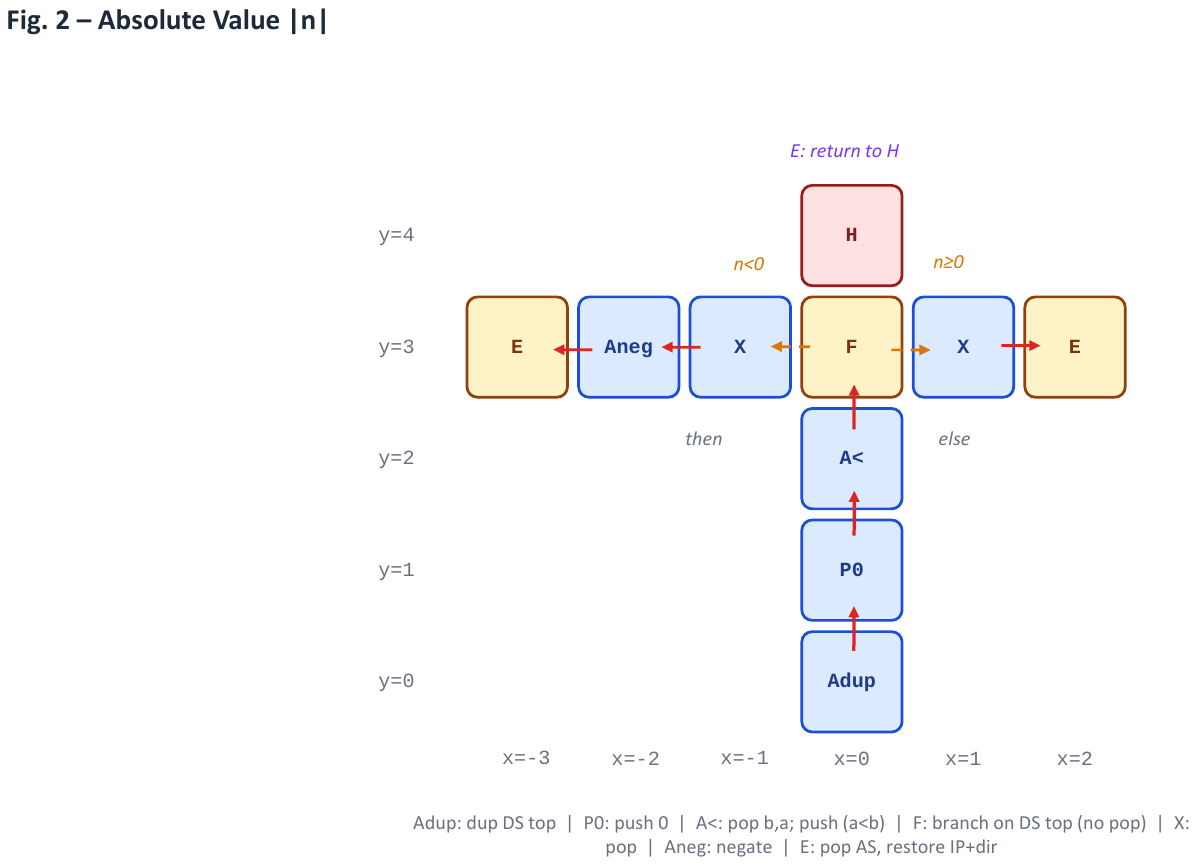}
\caption{Grid Program for $|n|$. Main spine: column $x=0$, moving upward.
  \instr{F} at $(0,3)$ branches: \emph{true} ($n<0$) turns left (CCW from up) to
  the negation arm; \emph{false} ($n\geq 0$) turns right (CW from up) and returns
  directly.  Both arms end with \instr{E}, which restores the IP to $(0,4)$.}
\label{fig:abs}
\end{figure}

\paragraph{Grid layout.}

\[
\begin{array}{r|ccccc}
y & x{=}{-3} & x{=}{-2} & x{=}{-1} & x{=}0 & x{=}1 \quad x{=}2 \\ \hline
4 & & & & \instr{H}   & \\
3 & \instr{E} & \instr{Aneg} & \instr{X} & \instr{F} & \instr{X} \quad \instr{E} \\
2 & & & & \instr{A<}  & \\
1 & & & & \instr{P0}  & \\
0 & & & & \instr{Adup}& \\
\end{array}
\]

\noindent The main spine runs along $x=0$ from $y=0$ to $y=4$.  \instr{F} at $(0,3)$
does \emph{not} pop its condition; each branch begins with \instr{X} to discard it.

\begin{itemize}
  \item \textbf{True branch} ($n<0$, IP turns CCW = left):
        $(- 1,3)$~\instr{X}, $(-2,3)$~\instr{Aneg}, $(-3,3)$~\instr{E}.
  \item \textbf{False branch} ($n\geq 0$, IP turns CW = right):
        $(1,3)$~\instr{X}, $(2,3)$~\instr{E}.
\end{itemize}
Both \instr{E} cells pop $(0,4,\mathsf{up})$ from AS and land on \instr{H}.

\paragraph{Execution trace for $n=-5$.}
Initial DS: $[-5]$, AS: $[]$, $\dir=\mathsf{up}$, $\IP=(0,0)$.
\begin{enumerate}
  \item $(0,0)$~\instr{Adup}: DS$=[-5,-5]$.
  \item $(0,1)$~\instr{P0}: DS$=[-5,-5,0]$.
  \item $(0,2)$~\instr{A<}: pop $0,-5$; push $1$; DS$=[-5,1]$.
  \item $(0,3)$~\instr{F}: condition $=1$ (true); push $(0,4,\mathsf{up})$ to AS; turn CCW $\Rightarrow$ $\dir=\mathsf{left}$; IP$\to(-1,3)$.
  \item $(-1,3)$~\instr{X}: pop $1$; DS$=[-5]$.
  \item $(-2,3)$~\instr{Aneg}: DS$=[5]$.
  \item $(-3,3)$~\instr{E}: pop AS; $\dir\leftarrow\mathsf{up}$; IP$\to(0,4)$.
  \item $(0,4)$~\instr{H}: output $=5$.~\checkmark
\end{enumerate}

\paragraph{Execution trace for $n=3$.}
Steps 1--3 give DS$=[3,0]$; \instr{A<} pushes $0$ (false).
\instr{F}: condition $=0$ (false); push $(0,4,\mathsf{up})$; turn CW $\Rightarrow$ $\dir=\mathsf{right}$; IP$\to(1,3)$.
$(1,3)$~\instr{X}: DS$=[3]$.  $(2,3)$~\instr{E}: IP$\to(0,4)$.  \instr{H}: output $=3$.~\checkmark

\subsection{Factorial}
\label{sec:factorial}

\begin{example}[Factorial via \instr{W} while loop]
\label{ex:factorial}
Compute $n!$ for a non-negative integer $n$.  The CDLL stores: node~0 ($\Prim$)~$=$ accumulator $\mathit{acc}=1$; node~1 ($\Sec$)~$=$ counter $c=n$.  DS is empty at start.  IP starts at $(0,0)$ moving upward.
\end{example}

\begin{figure}[ht]
\centering
\includegraphics[width=15cm]{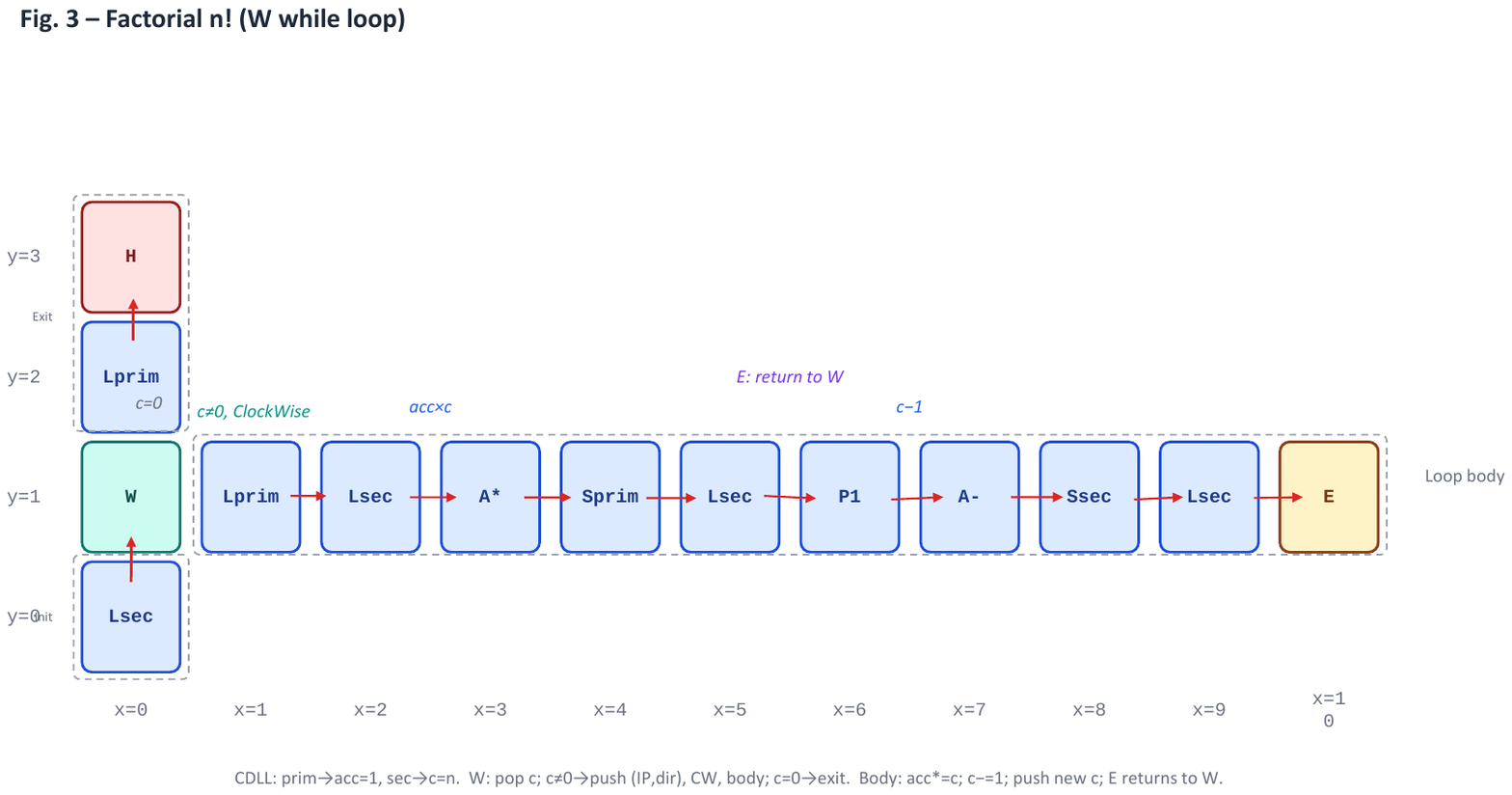}
\caption{Grid Program computing $n!$ using the \instr{W} (while) instruction.
  The main spine (column $x=0$, upward) reads the counter and enters the
  \instr{W} cell.  When the counter is non-zero, the IP turns right into the
  loop body; \instr{E} at the end of the body returns to \instr{W}.
  When the counter reaches zero, the IP exits upward, loads the accumulator,
  and halts.}
\label{fig:factorial}
\end{figure}

\paragraph{Grid layout.}

\[
\begin{array}{r|ccccccccccc}
y & x{=}0 & x{=}1 & x{=}2 & x{=}3 & x{=}4 & x{=}5 & x{=}6 & x{=}7 & x{=}8 & x{=}9 & x{=}10\\ \hline
3 & \instr{H}    & & & & & & & & & & \\
2 & \instr{Lprim}& & & & & & & & & & \\
1 & \instr{W}    & \instr{Lprim} & \instr{Lsec} & \instr{A*} & \instr{Sprim} & \instr{Lsec} & \instr{P1} & \instr{A-} & \instr{Ssec} & \instr{Lsec} & \instr{E} \\
0 & \instr{Lsec} & & & & & & & & & & \\
\end{array}
\]

\noindent The body row ($y=1$, $x=1\ldots10$) runs rightward. \instr{E} at $(10,1)$ pops
$(0,1,\mathsf{up})$ from AS and restores the IP to the \instr{W} cell, which then
re-evaluates the condition.

\paragraph{Execution trace for $n=3$.}
CDLL: $\Prim=1$, $\Sec=3$.  DS: $[]$.
\begin{enumerate}
  \item $(0,0)$~\instr{Lsec}: push $3$; DS$=[3]$.
  \item $(0,1)$~\instr{W}: pop $3\neq0$; push $(0,1,\mathsf{up})$; turn CW $\Rightarrow$ $\dir=\mathsf{right}$; IP$\to(1,1)$.
  \item \emph{Body iteration 1} ($c=3$): $\mathit{acc}\times c=1\times3=3$; $c\leftarrow2$. \instr{Lsec} pushes $2$; \instr{E} returns to $(0,1)$.
  \item \emph{Body iteration 2} ($c=2$): $3\times2=6$; $c\leftarrow1$. \instr{E} returns to $(0,1)$.
  \item \emph{Body iteration 3} ($c=1$): $6\times1=6$; $c\leftarrow0$. \instr{Lsec} pushes $0$; \instr{E} returns.
  \item $(0,1)$~\instr{W}: pop $0$; exit upward; IP$\to(0,2)$.
  \item $(0,2)$~\instr{Lprim}: push $6$; DS$=[6]$.
  \item $(0,3)$~\instr{H}: output $=6=3!$.~\checkmark
\end{enumerate}

\subsection{While Loop: Sum of First $n$ Integers}
\label{sec:while_example}

\begin{example}[Sum $1+2+\cdots+n$ using a while loop]
\label{ex:while_sum}
Compute $S=n(n+1)/2$ for $n\geq0$ using the \instr{W} instruction.
CDLL: node~0 ($\Prim$) $=$ counter $c$ (initialised to $n$); node~1 ($\Sec$) $=$ accumulator $S$ (initialised to $0$).  DS is empty at start.  IP starts at $(0,0)$ moving upward.
\end{example}

\begin{figure}[ht]
\centering
\includegraphics[width=15cm]{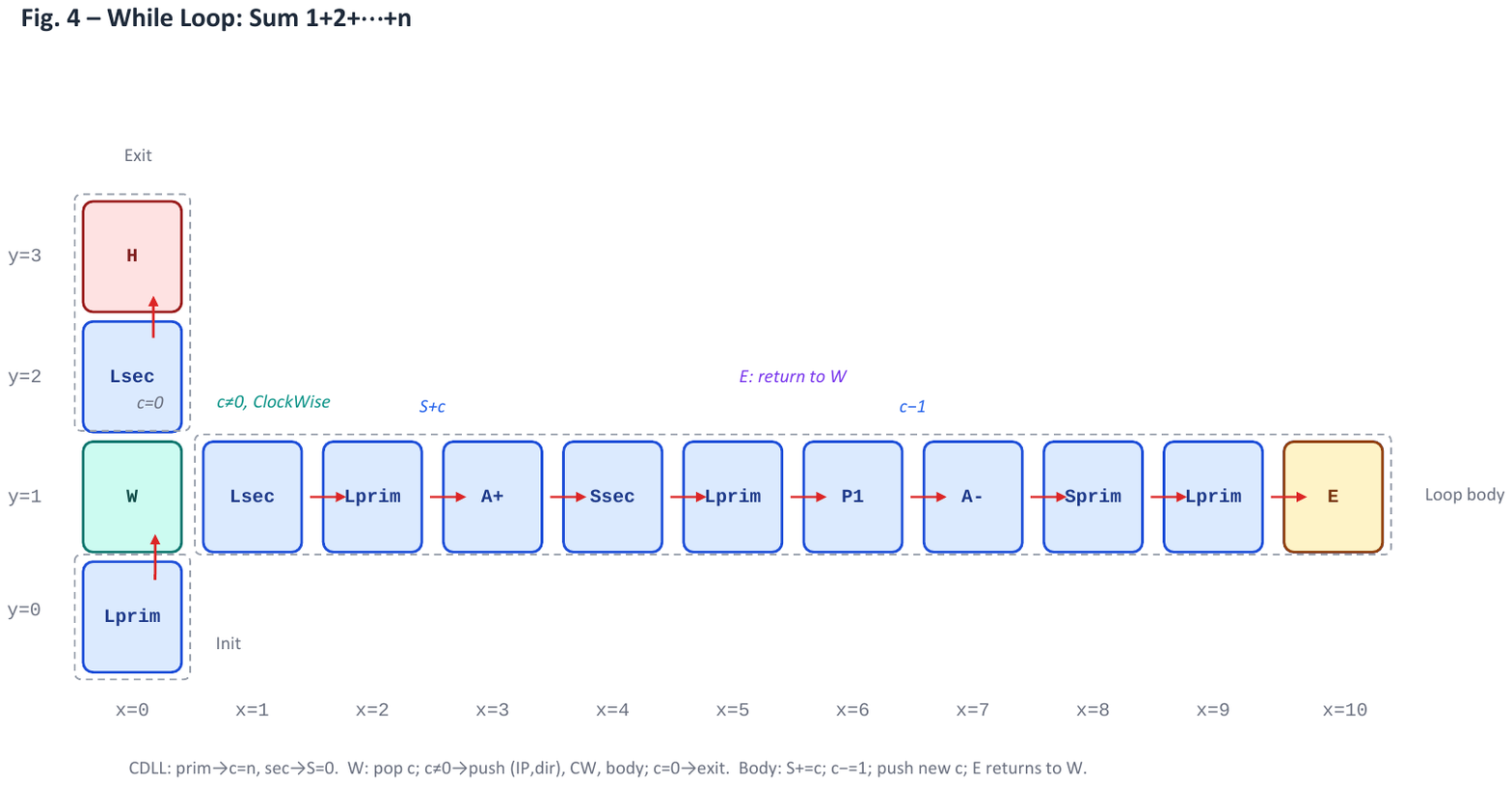}
\caption{Grid Program computing $\sum_{i=1}^{n}i$ with the \instr{W} (while) instruction (teal cell).
  The main spine (column $x=0$, upward) loads $c$ and enters \instr{W}.
  When $c\neq0$ the IP turns right and runs the body: $S\mathrel{+}=c$; $c\mathrel{-}=1$; push new $c$.
  \instr{E} returns to \instr{W}. When $c=0$ the IP exits upward, loads $S$, and halts.}
\label{fig:while}
\end{figure}

\paragraph{Grid layout.}

\[
\begin{array}{r|ccccccccccc}
y & x{=}0 & x{=}1 & x{=}2 & x{=}3 & x{=}4 & x{=}5 & x{=}6 & x{=}7 & x{=}8 & x{=}9 & x{=}10\\ \hline
3 & \instr{H}    & & & & & & & & & & \\
2 & \instr{Lsec} & & & & & & & & & & \\
1 & \instr{W}    & \instr{Lsec} & \instr{Lprim} & \instr{A+} & \instr{Ssec} & \instr{Lprim} & \instr{P1} & \instr{A-} & \instr{Sprim} & \instr{Lprim} & \instr{E}\\
0 & \instr{Lprim}& & & & & & & & & & \\
\end{array}
\]

\noindent \instr{W} at $(0,1)$ pops the condition (pushed by \instr{Lprim} at $(0,0)$ on first entry, and by the body's final \instr{Lprim} at $(9,1)$ on re-entry). When $c\neq0$: push $(0,1,\mathsf{up})$; turn CW $\Rightarrow$ right; enter body at $(1,1)$.  Body: $S\mathrel{+}=c$ (steps 1--4), $c\mathrel{-}=1$ (steps 5--8), push new $c$ (step 9), then \instr{E} at $(10,1)$ restores $(0,1,\mathsf{up})$.  When $c=0$: advance upward to $(0,2)$.

\paragraph{Execution trace for $n=3$.}
CDLL: $\Prim=3$, $\Sec=0$.
\begin{enumerate}
  \item $(0,0)$~\instr{Lprim}: push $3$; DS$=[3]$.
  \item $(0,1)$~\instr{W}: pop $3\neq0$; push $(0,1,\mathsf{up})$; turn right; IP$\to(1,1)$.
  \item \emph{Iteration 1} ($c=3$): $S\leftarrow0+3=3$; $c\leftarrow2$; push $2$; \instr{E} returns.
  \item \emph{Iteration 2} ($c=2$): $S\leftarrow3+2=5$; $c\leftarrow1$; push $1$; \instr{E} returns.
  \item \emph{Iteration 3} ($c=1$): $S\leftarrow5+1=6$; $c\leftarrow0$; push $0$; \instr{E} returns.
  \item $(0,1)$~\instr{W}: pop $0$; exit upward; IP$\to(0,2)$.
  \item $(0,2)$~\instr{Lsec}: push $6$.
  \item $(0,3)$~\instr{H}: output $=6=3\times4/2$.~\checkmark
\end{enumerate}

\subsection{Linear Search}
\label{sec:linsearch}

\begin{example}[Linear search]
\label{ex:linsearch}
Given a CDLL pre-loaded as $[t, a_0, a_1, \ldots, a_{k-1}]$—target $t$ in node~0 and elements $a_0,\ldots,a_{k-1}$ in nodes~$1,\ldots,k$—find the first index $i\in\{0,\ldots,k-1\}$ such that $a_i=t$, or $-1$ if not found. DS is empty on entry.
\end{example}

\begin{figure}[ht]
\centering
\includegraphics[width=15cm]{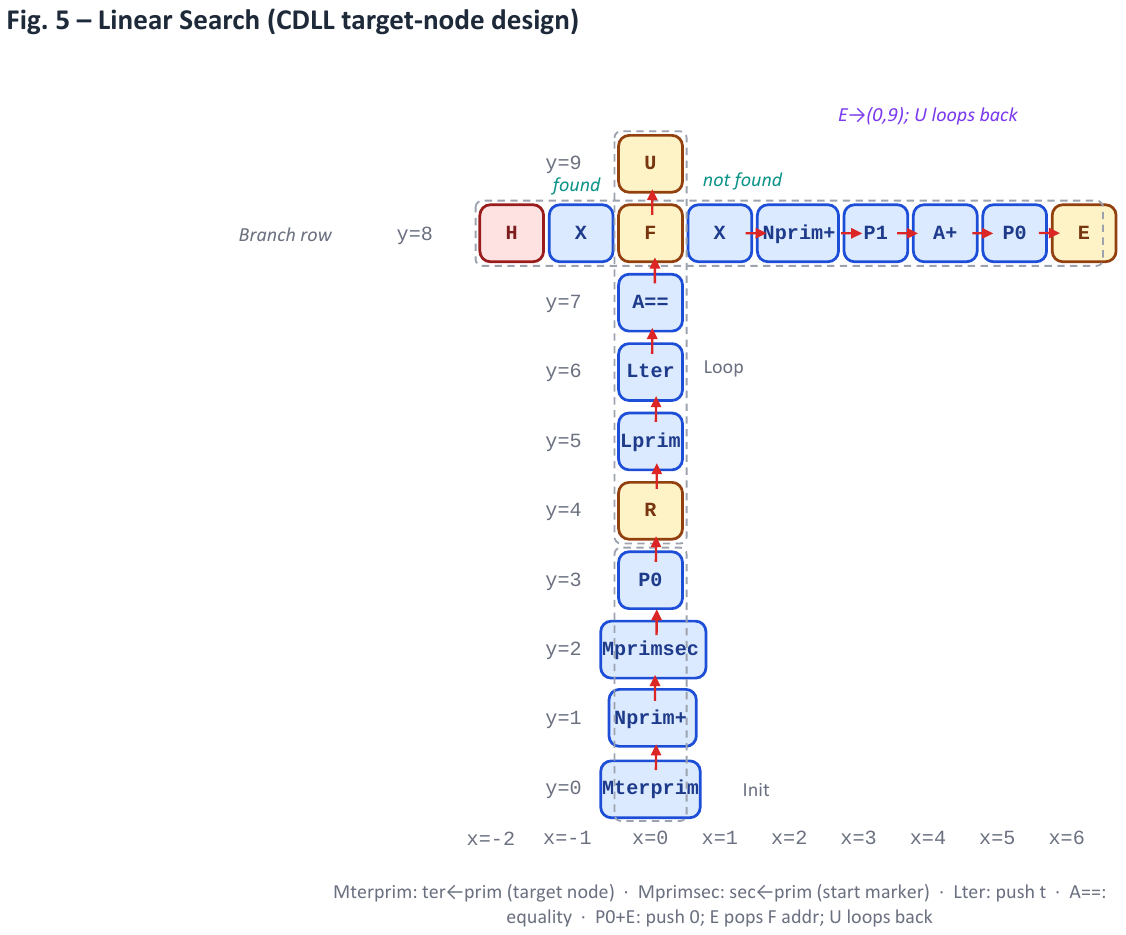}
\caption{Grid Program for linear search. The outer loop advances the CDLL pointer; the inner conditional checks equality and returns the current index or continues.}
\label{fig:linsearch}
\end{figure}

\paragraph{Strategy.}
The CDLL layout $[t, a_0, \ldots, a_{k-1}]$ puts the target $t$ permanently in node~0 (pointed to by $\Ter$ after init) so it is always accessible via \instr{Lter}. $\Prim$ starts at $a_0$ (the first element) and scans forward; $\Sec$ marks the start of the elements. A counter on DS tracks the current index, starting at $0$.

The loop uses \instr{R}/\instr{U}: each body iteration compares $a[\Prim]$ with $t$ via \instr{F}; on equality the found branch (\instr{H}) outputs the index. The not-found branch advances $\Prim$, increments the index, then pushes $0$ and executes \instr{E} to exit the \instr{F} branch, landing at the \instr{U} cell which loops back (condition $0$) to restart the body.

\emph{Note.} This grid has no termination condition for the not-found case: if $t$ is absent from the CDLL the loop runs indefinitely. A complete implementation would track the number of steps (e.g.\ by storing $k$ in a second CDLL node) and break after $k$ iterations, pushing $-1$.

\paragraph{Grid layout.}

The CDLL is pre-loaded as $[t, a_0, \ldots, a_{k-1}]$, with \texttt{prim}, \texttt{sec}, and \texttt{ter} all initially pointing to node~0 (the target node).  DS is empty on entry.

\[
\begin{array}{r|ccccccc}
y & x{=}{-2} & x{=}{-1} & x{=}0 & x{=}1 & x{=}2 & x{=}3 \cdots x{=}6\\ \hline
9 & & & \instr{U}     & & & \\
8 & \instr{H} & \instr{X} & \instr{F} & \instr{X} & \instr{Nprim+} & \instr{P1}\;\instr{A+}\;\instr{P0}\;\instr{E} \\
7 & & & \instr{A==}   & & & \\
6 & & & \instr{Lter}  & & & \\
5 & & & \instr{Lprim} & & & \\
4 & & & \instr{R}     & & & \\
3 & & & \instr{P0}    & & & \\
2 & & & \instr{Mprimsec}& & & \\
1 & & & \instr{Nprim+} & & & \\
0 & & & \instr{Mterprim}& & & \\
\end{array}
\]

\noindent \textbf{Init} (column $x=0$, $y=0\ldots3$, going upward):
\begin{itemize}
  \item $y=0$: \instr{Mterprim} — set $\Ter \leftarrow \Prim$ (so $\Ter$ permanently points to the target node).
  \item $y=1$: \instr{Nprim+}   — advance $\Prim$ to $a_0$ (first element).
  \item $y=2$: \instr{Mprimsec} — set $\Sec \leftarrow \Prim$ (marks start of elements).
  \item $y=3$: \instr{P0}       — push \texttt{index}$=0$.
\end{itemize}

\noindent \textbf{Loop} (\instr{R}/\instr{U}, column $x=0$, $y=4\ldots9$):
\begin{itemize}
  \item $y=4$: \instr{R} — push $(0,5,\mathsf{up})$ onto AS; body starts at $(0,5)$.
  \item $y=5\ldots7$: push element (\instr{Lprim}), push target (\instr{Lter}), compare (\instr{A==}).
  \item $y=8$: \instr{F} — condition $= (a_i = t)$.  \emph{Found} (true): IP turns CCW (left) to $(-1,8)$, $(-2,8)$.  \emph{Not found} (false): IP turns CW (right) to $(1,8)\ldots(6,8)$.
\end{itemize}

\noindent \textbf{Found branch} (left from \instr{F} at $(0,8)$ going up, so IP turns left):
$(- 1,8)$~\instr{X} pops the condition; $(-2,8)$~\instr{H} outputs \texttt{index}.

\noindent \textbf{Not-found branch} (right from \instr{F}):
\instr{X} (pop condition), \instr{Nprim+} (advance scan pointer), \instr{P1}~\instr{A+} (increment index), \instr{P0} (push $0$ as loop-back condition for \instr{U}), \instr{E} (pop \instr{F}'s saved address; jump to $(0,9)$). Then \instr{U} at $(0,9)$: sees $c=0$, peeks \instr{R}'s saved address $(0,5)$, and jumps back to restart the body. The \instr{U} condition is always $0$ (loop back) during normal search; the only exit is the found branch via \instr{H}.

\subsection{String Reversal}
\label{sec:strrev}

\begin{example}[String reversal]
\label{ex:strrev}
Given a string $s=c_0c_1\cdots c_{k-1}$ of $k$ characters loaded into the CDLL (one character per node, 0-indexed), produce the reversed string $c_{k-1}\cdots c_1c_0$.
\end{example}

\begin{figure}[ht]
\centering
\includegraphics[width=15cm]{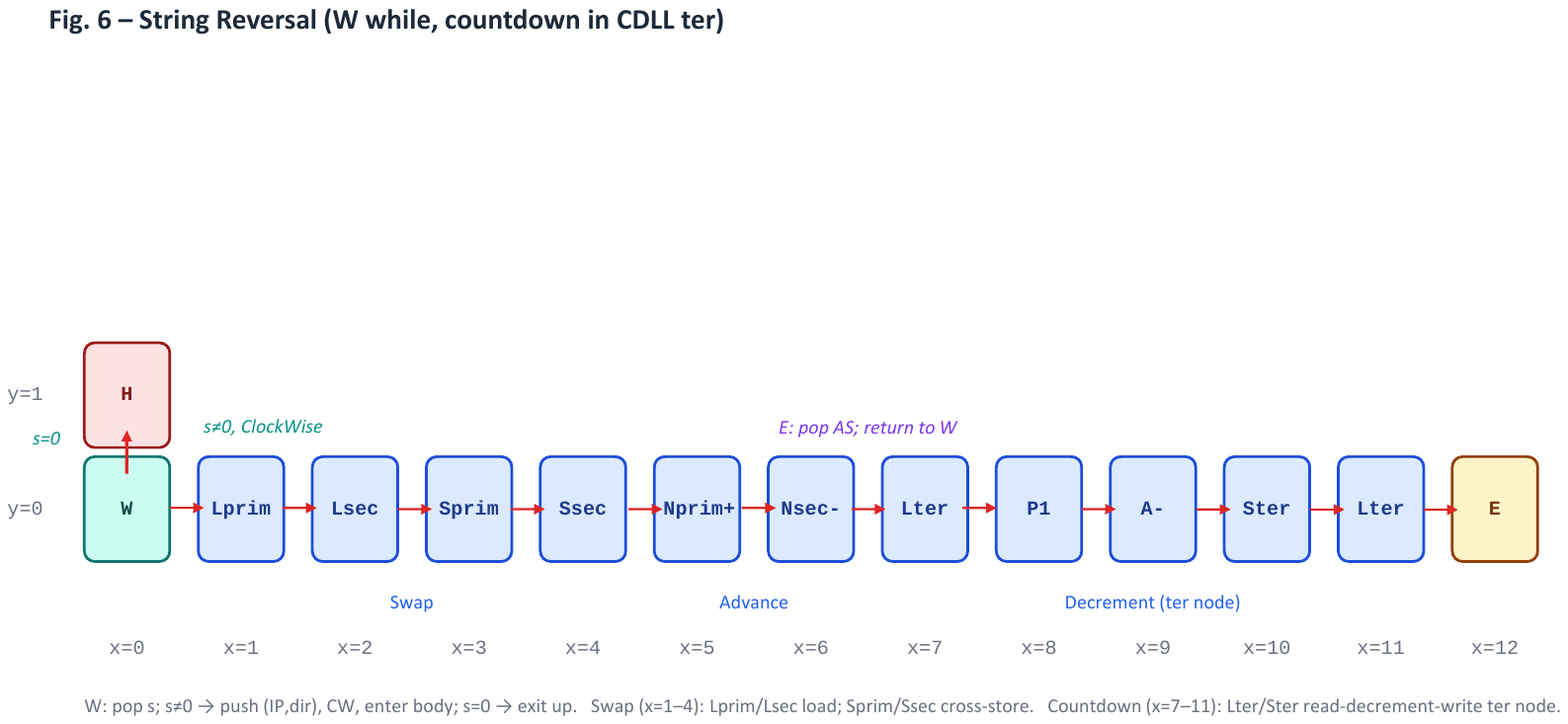}
\caption{Grid Program for string reversal using the CDLL. The program walks two pointers from opposite ends of the list, swapping characters until they meet.}
\label{fig:strrev}
\end{figure}

\paragraph{Strategy.}
Place $\Prim$ at $c_0$ (node~0) and $\Sec$ at $c_{k-1}$ (node~$k-1$). In each swap iteration: push $c[\Prim]$ and $c[\Sec]$ onto DS and cross-store; advance $\Prim$ forward (toward $c_{k-1}$) and $\Sec$ backward (toward $c_0$); decrement the swap counter stored in $\Ter$. Halt when the counter reaches zero after $\lfloor k/2\rfloor$ swaps.

\paragraph{Grid layout.}

The CDLL is pre-loaded as $[c_0, c_1, \ldots, c_{k-1}, s]$ where $s=\lfloor k/2\rfloor$ is the swap count.
$\Prim$ points to $c_0$ (node~0), $\Sec$ points to $c_{k-1}$ (node~$k-1$), and $\Ter$ points to the swap-count node (node~$k$).  DS contains $[s]$ on entry.

\[
\begin{array}{r|ccccccccccccc}
y & x{=}0 & x{=}1 & x{=}2 & x{=}3 & x{=}4 & x{=}5 & x{=}6 & x{=}7 & x{=}8 & x{=}9 & x{=}10 & x{=}11 & x{=}12\\ \hline
1 & \instr{H}  & & & & & & & & & & & & \\
0 & \instr{W}  & \instr{Lprim} & \instr{Lsec} & \instr{Sprim} & \instr{Ssec} & \instr{Nprim+} & \instr{Nsec-} & \instr{Lter} & \instr{P1} & \instr{A-} & \instr{Ster} & \instr{Lter} & \instr{E} \\
\end{array}
\]

\noindent The \instr{W} cell at $(0,0)$ acts as the loop head. On each iteration:
\begin{enumerate}
  \item \instr{W} pops the countdown $s$ from DS. If $s=0$: exit upward to \instr{H}. If $s\neq0$: push $(0,0,\mathsf{up})$ onto AS; turn CW (up$\to$right); enter body at $(1,0)$.
  \item \textbf{Swap} ($x=1\ldots4$): \instr{Lprim} pushes $c[\Prim]$; \instr{Lsec} pushes $c[\Sec]$; \instr{Sprim} stores old $c[\Sec]$ at $\Prim$; \instr{Ssec} stores old $c[\Prim]$ at $\Sec$.
  \item \textbf{Advance pointers} ($x=5\ldots6$): \instr{Nprim+} (left pointer moves right); \instr{Nsec-} (right pointer moves left).
  \item \textbf{Decrement countdown} ($x=7\ldots10$): \instr{Lter} loads $s$ from the CDLL countdown node; \instr{P1}~\instr{A-} computes $s-1$; \instr{Ster} stores $s-1$ back.
  \item \textbf{Re-arm} ($x=11$): \instr{Lter} pushes the new $s-1$ onto DS (as condition for the next \instr{W}).
  \item \instr{E} at $x=12$ pops $(0,0,\mathsf{up})$ from AS; IP returns to \instr{W}.
\end{enumerate}

\noindent The key design point is that the countdown is stored in the CDLL (not only on DS), because \instr{W} \emph{pops} the condition before entering the body; the body therefore reloads the countdown from $\Ter$ and pushes the updated value for the next \instr{W} call.

\section{Arbitrary Nesting of Control Structures}
\label{sec:nesting}

One of the most important structural properties of Grid Programs is that \emph{all control structures—branches (\instr{F}/\instr{E}), pre-tested loops (\instr{W}/\instr{E}), post-tested loops (\instr{R}/\instr{U}), and function calls (\instr{K}/\instr{E})—can be nested to an arbitrary depth}, limited only by the size of the address stack AS. Figure \ref{fig:nesting_schematic} illustrates this fact.

\subsection*{Why nesting works}

The address stack AS acts as the single universal control stack. Each control structure, on entry, pushes exactly one triplet $(x, y, d)$ onto AS:
\begin{itemize}
  \item \instr{R} pushes the loop-body entry address $(x', y', d)$.
  \item \instr{W} (when condition is non-zero) pushes its own cell address $(x_W, y_W, d)$.
  \item \instr{F} pushes the resume address $(x_F', y_F', d)$.
  \item \instr{K} pushes the return address $(x_{\mathrm{ret}}, y_{\mathrm{ret}}, d)$.
\end{itemize}
Each corresponding exit instruction (\instr{U} on a true condition, \instr{E}) pops exactly one triplet. Since AS is an unbounded stack and each level contributes exactly one entry, the maximum depth of AS during execution equals the maximum nesting depth of active control structures. There is no architectural limit.

\subsection*{Spatial encoding}

In a one-dimensional language, nesting depth is encoded syntactically by bracket depth or indentation. In a Grid Program it is encoded \emph{geometrically}: each nested level occupies a different spatial direction or region of the grid. An outer loop running rightward may contain a conditional branch pointing upward, whose true arm contains an inner loop running further right, whose body calls a function in a separate column—each level occupying a distinct spatial axis or region. Because the domain need not be connected, a function body can be placed in a completely separate cluster of cells from its caller; the \instr{K} instruction bridges the gap at run time. Moreover, sequences of \instr{B} (blank) and \instr{T} (turn) instructions can be inserted as needed to avoid crowded regions. 

\begin{proposition}[Arbitrary nesting depth]
\label{prop:nesting}
For every $n \geq 0$, there exists a Grid Program $\GP_n$ whose execution reaches address-stack depth exactly $n$.
\end{proposition}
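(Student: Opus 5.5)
The plan is an explicit construction: a straight column of \instr{R} cells. Let $\GP_n$ have domain $\Dom_n = \{(0,y) : 0 \le y \le n\}$, with $f(0,y) = \instr{R}$ for $0 \le y < n$ and $f(0,n) = \instr{H}$; for $n=0$ this is just a single \instr{H} at the origin. No branch, no data stack and no CDLL contents are involved, so the input sequence can be arbitrary (say a single $0$, so that the initial CDLL is non-empty as Definition~\ref{def:initstate} requires). I chose \instr{R} over \instr{F} or \instr{W} because its transition rule in Table~\ref{tab:transitions} has no precondition: it pushes $(\ip',\dir)$ onto AS and moves to $\ip'$ regardless of the DS contents.

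The argument is an induction on the number of executed steps. The invariant is: after $k$ steps, for $0 \le k \le n$, the state has $\IP=(0,k)$, $\dir=\mathsf{up}$ and $|\AS| = k$, with AS holding $(0,1,\mathsf{up}),\ldots,(0,k,\mathsf{up})$ from bottom to top. The base case $k=0$ is exactly the initial state $s_0$ of Definition~\ref{def:initstate}. For the step $k<n$, the cell $(0,k)$ holds \instr{R}, and $\dir=\mathsf{up}$ gives $\ip'=(0,k+1)\in\Dom_n$. The rule for \instr{R} then pushes $(0,k+1,\mathsf{up})$ and moves the IP to $(0,k+1)$, leaving the direction unchanged. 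This step also checks that the IP never leaves $\Dom_n$, so the abnormal-halt clause of the robustness remark never applies. At $k=n$ the IP is on \instr{H}, execution terminates, and AS has depth $n$.

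To show the depth is \emph{exactly} $n$, I observe that no instruction in $\GP_n$ pops AS and each step pushes at most one entry. The depth sequence is therefore $0,1,\ldots,n$, so it reaches $n$ and never exceeds it. There is no real obstacle; the only step needing care is the direction bookkeeping, namely that \instr{R} moves to $\ip'$ without rotating $\dir$. A nested variant using \instr{F} cells (each pushing one triplet and turning) would work too, but it needs case analysis on turning directions to avoid collisions, which is why the straight \instr{R} column is preferable.
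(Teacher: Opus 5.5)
Your proof is correct, but it follows a different route from the paper.

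The paper inducts on $n$. $\GP_n$ puts a \instr{W} cell at the origin and embeds a copy of $\GP_{n-1}$ as the loop body. The triplet pushed by \instr{W} stays on AS while the inner program climbs to depth $n-1$ above it.

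You instead give a flat, explicit program: a column of $n$ \instr{R} cells capped by \instr{H}. You verify it by induction on execution steps, using only three facts: \instr{R} is unconditional, it leaves $\dir$ unchanged, and nothing in the program pops AS.

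Your argument is more elementary and can be checked line by line against Table~\ref{tab:transitions}. It also avoids several points the paper glosses over:
\begin{itemize}
\item By Definition~\ref{def:initstate} the data stack starts empty, so each nested \instr{W} needs a preceding push such as \instr{P1} to meet its precondition $\DS \neq \varnothing$.
\item The embedded $\GP_{n-1}$ must be rotated, not merely shifted (the paper's ``remapped appropriately''), because the body is entered moving right rather than up.
\item The innermost program ends in \instr{H}, so the closing \instr{E} and the final exit from \instr{W} that the paper describes are never reached.
\end{itemize}

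What the paper's construction buys is thematic. Its depth comes from genuinely nested, well-bracketed loops, which is the point of Section~\ref{sec:nesting}. Your $n$ entries are repeat frames opened in sequence and never closed by a matching \instr{U}. For the proposition as literally stated, namely the existence of an execution whose AS depth reaches exactly $n$, your construction is sufficient.
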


\begin{proof}
By induction on $n$.

\textbf{Base case ($n = 0$).} The program $\GP_0$ consists of a single cell $(0,0)$ labelled \instr{H}. AS never grows; maximum depth is $0$.

\textbf{Inductive step.} Given $\GP_{n-1}$ with maximum AS depth $n-1$, construct $\GP_n$ as follows. Place a \instr{W} cell at $(0,0)$ with a non-zero initial condition on DS. \instr{W} pushes $(0, 0, \mathsf{up})$ onto AS (depth $1$) and turns the IP clockwise (rightward) into the body. The body consists of the cells of $\GP_{n-1}$, shifted so that their entry point lies at $(1,0)$, with coordinates remapped appropriately. During execution of this embedded sub-program, AS reaches depth $(n-1)+1 = n$, since the entry pushed by \instr{W} remains on the stack throughout the body. After the body the \instr{E} instruction pops the \instr{W} entry; thereafter \instr{W} sees condition $0$ and exits, returning to depth $0$.

The resulting program $\GP_n$ reaches AS depth exactly $n$.
\end{proof}

\begin{remark}
The same argument applies to mixed nesting: a \instr{K} call (depth $+1$) to a subroutine containing a \instr{W} loop (depth $+1$) inside a conditional branch \instr{F} (depth $+1$) reaches AS depth $3$. The spatial layout ensures the three levels do not interfere: each occupies a different direction or a different connected component of the (possibly disconnected) domain.
\end{remark}

\begin{figure}[ht]
  \centering
  \includegraphics[width=15cm]{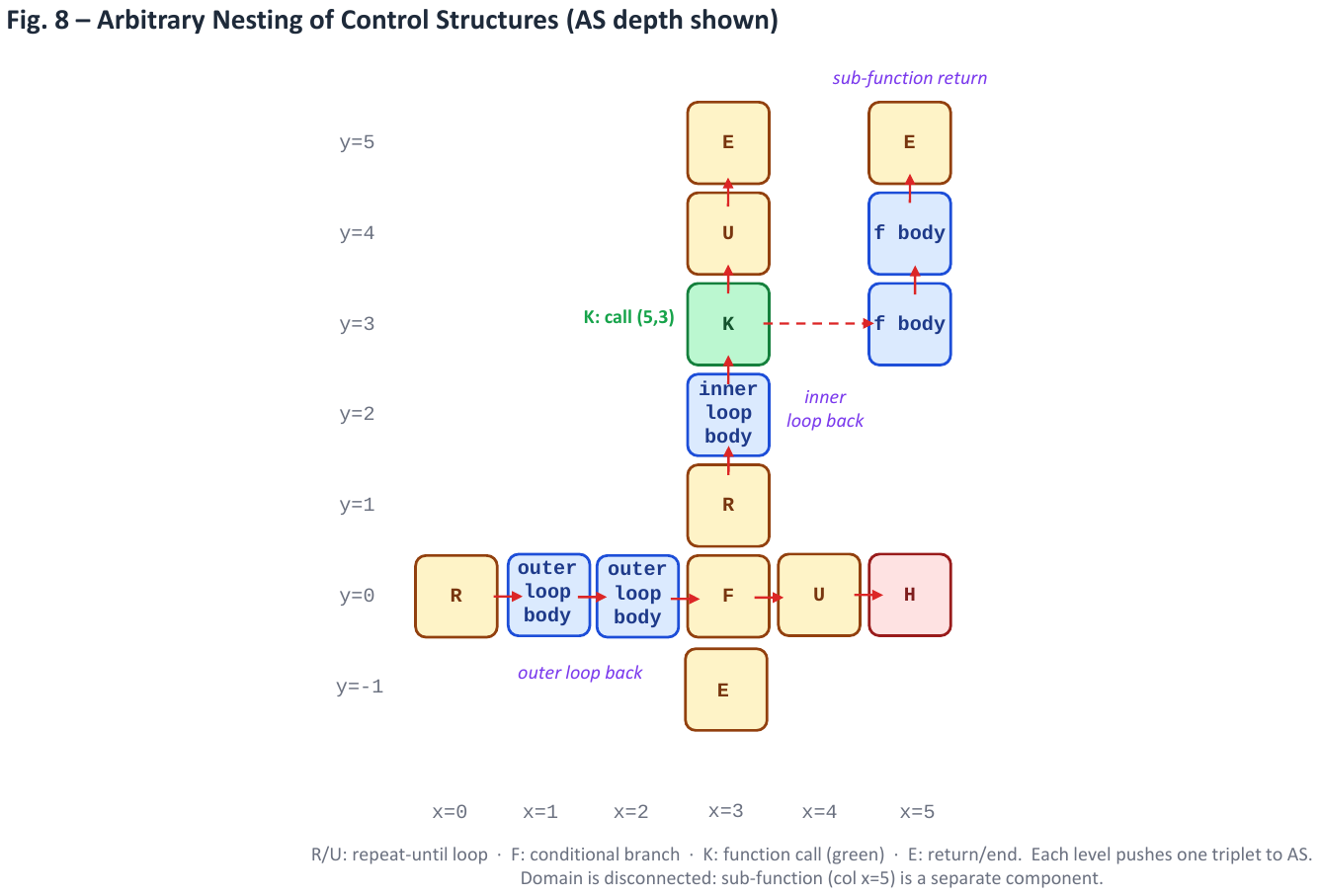}
  \caption{Schematic of four nested control structures: an outer \instr{R}/\instr{U} loop (horizontal), a conditional \instr{F}/\instr{E} branch (upward), an inner \instr{W} loop (upward), and a \instr{K}/\instr{E} function call (rightward) to a subroutine in a separate column. The AS depth annotations on the right show the stack growing by one for each level entered. The domain of this program is disconnected: the subroutine cells (column $x=5$--$6$) form an isolated component joined to the main body only via the run-time \instr{K} jump.}
  \label{fig:nesting_schematic}
\end{figure}

\section{Turing Completeness}
\label{sec:turing}

\begin{theorem}[Turing completeness of Grid Programs]
\label{thm:tc}
Grid Programs are Turing-complete: for every partial computable function $\varphi: \ZZ^* \rightharpoonup \ZZ^*$, there exists a Grid Program $\GP$ that computes $\varphi$.
\end{theorem}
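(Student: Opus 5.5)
We prove Theorem~\ref{thm:tc} by simulating an arbitrary register (Minsky) machine, which suffices because every partial computable function on integer tuples is computed by some counter machine with finitely many registers $r_1,\dots,r_m$ and a finite program of instructions $\mathsf{INC}(j)$ and $\mathsf{JZDEC}(j,\ell)$ (if $r_j=0$ jump to $\ell$, else decrement). The inputs $a_1,\dots,a_n$ occupy the first nodes of the CDLL in the initial state (Definition~\ref{def:initstate}). Unbounded integer values in $\Val$ mean each register fits in a single node.

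\textbf{Step 1: register storage and addressing.} Before the main loop, a fixed prologue uses \instr{Iprim} to extend the CDLL to at least $m+1$ nodes, so that node~$j$ holds $r_j$ and one extra node holds the current program counter $\mathit{pc}$. Registers not supplied as input are initialised to $0$, which \instr{I} does automatically. We pin $\Ter$ to node~$0$ as a fixed anchor using \instr{M}, exactly as in the linear-search example (Section~\ref{sec:linsearch}). To access register $j$, we execute \instr{Mprimter} followed by $j$ copies of \instr{Nprim+} in a straight line of cells; since $m$ is fixed, this is a constant-length path. We then perform \instr{Lprim}, \instr{P1}, \instr{A+} (or \instr{A-}), \instr{Sprim}. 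Constants needed for $\mathit{pc}$ are built from \instr{P0}/\instr{P1} and repeated \instr{A+}.

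\textbf{Step 2: control-flow skeleton.} The main loop is a single \instr{W} loop whose condition is ``$\mathit{pc}\neq \mathit{halt}$''. Its body is a dispatch cascade of nested \instr{F}/\instr{E} tests $\mathit{pc}=k$, one per machine instruction $k$. Each \instr{F} arm first discards the condition with \instr{X}, as in Example~\ref{ex:abs}. The true arm runs the gadget for instruction $k$ and sets $\mathit{pc}$ to the successor. For $\mathsf{JZDEC}$ the gadget is itself an \instr{F} on $r_j=0$: one arm stores $\ell$ into the $\mathit{pc}$ node, the other decrements $r_j$ and stores $k+1$. Each arm ends with \instr{E}.

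Encoding the program counter as data, rather than as geometric jumps, keeps every control structure properly nested. Each \instr{F} push is therefore matched by exactly one \instr{E} pop, and by the argument preceding Proposition~\ref{prop:nesting} the AS depth stays bounded by the constant nesting depth. At the end of the body, we reload the condition $\mathit{pc}\neq\mathit{halt}$ onto DS and execute \instr{E} to return to \instr{W}, just as in the factorial example. On exit, we load the output register(s) and reach \instr{H}.

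\textbf{Step 3: correctness and the main obstacle.} Correctness is an induction on the number of machine steps. The invariant is that at each arrival at the \instr{W} cell with empty AS, the CDLL nodes hold $(r_1,\dots,r_m,\mathit{pc})$ of the simulated configuration, and DS holds only the loop condition. Consequently the grid halts iff the machine halts, with the same output, and it diverges otherwise.

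The hardest part is the geometric layout. The cascade of \instr{F} cells and their arms must be placed in $\ZZ^2$ so that no arm's path crosses another cell, so that each \instr{E} returns to a cell from which execution continues correctly, and so that the IP never leaves $\Dom$. We plan to handle this by placing instruction $k$'s test on a vertical spine at height $c\cdot k$. Its true arm runs leftward in its own row, and its false arm runs a short rightward stub to \instr{E}; the resume cell is the next spine cell. Rows are spaced far enough apart, with \instr{B}/\instr{T} padding, that all arms are disjoint. Checking that the turn semantics ($\rho$ and $\rho^{-1}$) send each arm into its intended row is where errors are most likely, so we verify one generic gadget explicitly and then argue by translation invariance.
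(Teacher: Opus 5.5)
Your proposal is correct in outline and shares the paper's skeleton: counters held in $\CDLL$ nodes, the program counter kept as data in a $\CDLL$ node, and an outer \instr{W}/\instr{E} loop. It differs from the paper in two substantive places.

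\textbf{Dispatch.} The paper turns each gadget into a subroutine. Inside an infinite \instr{W} loop it uses \instr{K} to jump to the gadget whose index is stored at $\Ter$, and each gadget returns with \instr{E}. You instead dispatch through a spine of \instr{F} tests $\mathit{pc}=k$. The paper's route gives constant-cost dispatch and showcases \instr{K}. However, it needs computed coordinates and an entry direction $d\in\mathcal{C}$ on $\DS$. Directions are not elements of $\Val$, and none of the listed instructions produces one (\instr{Pk} pushes only $0,1,e,\pi$), so that route tacitly requires an integer encoding of directions. Your cascade costs $O(p)$ tests per pass but uses only \instr{F}, \instr{W}, \instr{E} and data instructions. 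You also describe an actual layout, which the paper does not.

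\textbf{Machine model.} The paper simulates a 2-counter machine with counters ``initialised to the input values''. But 2-counter machines compute every partial computable function only modulo an exponential encoding such as $2^{x}3^{y}$, so that route silently needs encoding and decoding phases. Your $m$-register machine avoids this, with registers reached from a fixed anchor by a constant number of \instr{Nprim+} steps.

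A few details to tighten:
\begin{itemize}
\item Despite the word ``nested'', your spine runs the tests sequentially, not as an else-if chain. A gadget that sets $\mathit{pc}$ to a larger index is therefore executed again in the same pass. This is harmless, but the induction should be over passes, each performing at least one machine step, not one step per pass.
\item Starting from $a_1$, repeated \instr{Iprim} inserts after $a_1$ and displaces $a_2,\dots,a_n$. Move $\Prim$ to $a_n$ with \instr{Nprim-} first.
\item Push a non-zero condition before the first arrival at \instr{W}.
\item The rule for \instr{F} in Table~\ref{tab:transitions} does not update $\dir$. Say explicitly that you use the semantics of the worked examples, where \instr{F} rotates $\dir$.
\item Like the paper, you leave negative integers, variable arity and multi-component outputs of $\varphi:\ZZ^*\rightharpoonup\ZZ^*$ to an unstated encoding. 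A sign-folding prologue handles negatives: $z\mapsto 2z$ for $z\ge 0$ and $z\mapsto -2z-1$ for $z<0$, using native arithmetic and \instr{F}.
\end{itemize}
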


\begin{proof}
We prove Turing completeness by simulating an arbitrary 2-counter machine (Minsky machine) \citep{minsky1967computation}. A 2-counter machine $M$ has two non-negative integer counters $c_1, c_2$ and a finite set of instructions of two kinds:
\begin{enumerate}
  \item $\mathtt{INC}(c_i, \ell)$: increment counter $c_i$; go to instruction $\ell$.
  \item $\mathtt{DEC}(c_i, \ell_1, \ell_2)$: if $c_i > 0$, decrement $c_i$ and go to $\ell_1$; else go to $\ell_2$.
\end{enumerate}
Since 2-counter machines are Turing-complete \citep{minsky1967computation}, it suffices to simulate $M$ with a Grid Program.

\paragraph{Encoding the counters.}
We store $c_1$ and $c_2$ in two CDLL nodes pointed to by $\Prim$ and $\Sec$, respectively. Both are initialised to the input values. The index of the next gadget to be executed is stored in a CDLL node pointed to by $\Ter$.

\paragraph{Encoding program counter.}
Let $M$ have instructions $\ell_0, \ell_1, \ldots, \ell_{p-1}$. We lay out $p$ \emph{gadgets}—one per instruction—along the positive $y$-axis at $x=0$. Each gadget occupies a horizontal strip of cells. The IP starts at $(0,0)$ and passes through gadget $\ell_j$ when it is at row $y = j \cdot h$ for a fixed gadget height $h$.

\paragraph{Simulating \texttt{INC}$(c_1, \ell)$.}
The gadget at row $y_j$ for this instruction:
\begin{enumerate}
  \item \instr{Lprim}: push $c_1$ onto DS.
  \item \instr{P1}: push $1$.
  \item \instr{A+}: DS $\leftarrow c_1 + 1$.
  \item \instr{Sprim}: store incremented value back to $\Prim$.
  \item Set the CDLL node pointed by $\Ter$ to the index of gadget $\ell$.
\end{enumerate}

Simulating \texttt{INC}$(c_2, \ell)$ is analogous to the above, changing \instr{Lprim} to \instr{Lsec} and \instr{Sprim} to \instr{Ssec}.

\paragraph{Simulating \texttt{DEC}$(c_1, \ell_1, \ell_2)$.}
\begin{enumerate}
  \item \instr{Lprim}: push $c_1$.
  \item \instr{P0}: push $0$.
  \item \instr{A>}: push $c_1 > 0$.
  \item \instr{F}: branch on condition.
  \begin{itemize}
    \item \textit{True branch} ($c_1 > 0$): \instr{X} (pop condition); \instr{Lprim}; \instr{P1}; \instr{Asub}; \instr{Sprim}; set the CDLL node pointed by $\Ter$ to the index of gadget $\ell_1$.
    \item \textit{False branch} ($c_1 = 0$): \instr{X}; set the CDLL node pointed by $\Ter$ to the index of gadget $\ell_2$.
  \end{itemize}
\end{enumerate}

Simulating \texttt{DEC}$(c_2, \ell_1, \ell_2)$ is analogous to the above, changing \instr{Lprim} to \instr{Lsec} and \instr{Sprim} to \instr{Ssec}.

\paragraph{Navigation between gadgets.}
Each gadget is structured as a function. The main body of the program is an infinite while loop (\instr{W}/\instr{E}) whose main body uses \instr{K} to call the function (gadget) indicated by the CDLL node pointed by $\Ter$. The execution of each gadget finishes with the return instruction \instr{E}. Since the number of instructions $p$ is finite and fixed, the resulting grid is finite.

\paragraph{Conclusion.}
The constructed grid program correctly simulates $M$: each step of $M$ corresponds to a bounded number of steps of the grid program. Halt in $M$ corresponds to \instr{H} in the grid program. Since $M$ was an arbitrary 2-counter machine, and every partial computable function is computable by some 2-counter machine, the Grid Program model is Turing-complete.
\end{proof}

\begin{corollary}
The halting problem for Grid Programs is undecidable.
\end{corollary}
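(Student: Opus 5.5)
The plan is a standard many-one reduction from the halting problem for 2-counter machines (equivalently, Turing machines), using the construction in the proof of Theorem~\ref{thm:tc}. Given an arbitrary 2-counter machine $M$ with instructions $\ell_0,\ldots,\ell_{p-1}$, I would define a map $M \mapsto \GP_M$ that outputs the grid built there: one gadget per instruction, the dispatch loop (\instr{W}/\instr{E}) that issues \instr{K} calls based on the program-counter node under $\Ter$, and an \instr{H} cell reached exactly when the simulated machine enters its halting instruction. The first step is to observe that this map is computable. The grid is finite, its size is $O(p)$ times a fixed gadget size, and each cell's instruction is read off directly from the corresponding instruction of $M$. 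So a Turing machine can write out the finite pair $(\Dom,f)$ from a description of $M$.

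The second step is to establish the equivalence \emph{$M$ halts on input $(a,b)$ iff $\GP_M$ halts (reaches \instr{H}) on the initial CDLL encoding $(a,b)$}. The forward direction follows from the step-by-step correspondence stated in the conclusion of Theorem~\ref{thm:tc}: each step of $M$ corresponds to a bounded number of grid steps. For the converse I would prove an invariant by induction on dispatch-loop iterations. At each return to the \instr{W} cell, the following hold: AS contains exactly the single \instr{W} entry; the condition on DS is non-zero; the nodes under $\Prim$ and $\Sec$ hold the current counters; and the node under $\Ter$ holds the current instruction index of $M$. It follows that the only \instr{H} cell reachable is the one placed in the halting gadget, and that a grid run never stops in any other way.

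The main obstacle is this last point. The paper's Remark on robustness counts the IP stepping off $\Dom$, and presumably a stuck precondition such as \instr{K} with $|\DS|<3$, as abnormal halting. I must therefore either verify that the construction never triggers these, or define ``halts'' to include abnormal termination and check the equivalence still holds. I would take the first route. The check goes cell by cell: every gadget returns via \instr{E} with DS balanced, so each \instr{X} removes the \instr{F} condition, and every \instr{K} target lies in $\Dom$.

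Finally, if the halting problem for Grid Programs were decidable, composing its decider with the computable map $M\mapsto\GP_M$ would decide halting for 2-counter machines. That contradicts Minsky's undecidability result, which follows from their Turing-completeness \citep{minsky1967computation} together with the undecidability of the halting problem for Turing machines. Alternatively, one can phrase this as an immediate consequence of Theorem~\ref{thm:tc} applied to an effective enumeration, via the standard argument that any model which effectively simulates all Turing machines inherits undecidability of halting.
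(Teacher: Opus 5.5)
Your proposal is correct and follows the paper's route. The paper gives no separate proof: it treats the corollary as an immediate consequence of Theorem~\ref{thm:tc}, i.e.\ of the effective simulation of 2-counter machines, and you make the computable reduction $M \mapsto \GP_M$ and the no-abnormal-termination check explicit. One small correction to your invariant: when \instr{E} sends the IP back to the \instr{W} cell it has already popped the \instr{W} entry, so AS is empty at that moment, and the single \instr{W} entry is present only from the next execution of \instr{W} until the end of the body.
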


\section{Related Work}
\label{sec:related}

\paragraph{Two-dimensional programming languages.}
The most well-known two-dimensional language is \textbf{Befunge} \citep{befunge}, introduced by Chris Pressey in 1993. Like Grid Programs, Befunge lays instructions on a two-dimensional grid and uses a directional instruction pointer; control-flow instructions change direction, and values are manipulated on a stack. Befunge-93 operates on a fixed $80 \times 25$ toroidal grid, limiting it to finite computation; Funge-98 \citep{funge98} generalises this to arbitrary dimensions and an infinite grid, achieving Turing completeness. \textbf{Grid Programs} differ from Befunge in several key respects: (i) programs are arbitrary finite (not necessarily connected) subsets of $\ZZ^2$ rather than a fixed rectangular grid; (ii) the CDLL provides structured, pointer-based storage not present in Befunge; (iii) there is a formal address stack and explicit loop/branch structures (\instr{W}/\instr{R}/\instr{U}/\instr{F}/\instr{E}) rather than relying on self-modification of the grid.

\textbf{Piet} \citep{piet} is another notable two-dimensional language, in which programs are bitmap images and instructions are encoded as colour transitions between regions. While visually striking, Piet's instruction set and control-flow model are quite different from Grid Programs.

\textbf{Langton's Ant} \citep{langton1986studying} and other two-dimensional cellular automata \citep{wolfram2002new} demonstrate that complex computation can arise from simple two-dimensional rules, though they are not programming languages in the conventional sense.

\paragraph{Stack-based languages.}
The data stack and postfix arithmetic of Grid Programs are directly inspired by \textbf{Forth} \citep{brodie1984starting}, a concatenative, stack-based language with a long history in embedded systems. \textbf{PostScript} \citep{adobe1999postscript} and \textbf{Factor} \citep{factor} are related stack-based languages. The key novelty in Grid Programs is the combination of a stack with the CDLL and the two-dimensional execution model.

\paragraph{Linked-list and pointer-based models.}
The CDLL in Grid Programs is reminiscent of the tape in a Turing machine \citep{turing1936computable}, except that it is doubly linked and circular, supports three simultaneous access pointers, and allows insertion and deletion. The use of multiple simultaneous pointers into a list recalls \textbf{Pointer machines} \citep{schonhage1980storage} and \textbf{Kolmogorov--Uspensky machines} \citep{kolmogorov1958definition}, which compute over linked structures rather than sequential tapes.

\paragraph{Register machines and counter machines.}
The proof of Turing completeness in Section~\ref{sec:turing} uses 2-counter machines \citep{minsky1967computation}, which are a standard minimal model for establishing Turing completeness. \textbf{RAM models} \citep{cook1972time} provide a closer analogy to practical computation but require explicit addressing; Grid Programs deliberately avoid explicit addresses.

\paragraph{Visual and spatial programming.}
Several visual programming environments, such as \textbf{LabVIEW} \citep{labview} and \textbf{Scratch} \citep{scratch}, lay out programs as two-dimensional diagrams, but these are typically dataflow graphs rather than instruction grids. \textbf{Dataflow architectures} \citep{dennis1974first} similarly use a spatial metaphor for computation but operate at the hardware level. Grid Programs combine spatial layout with an imperative, instruction-pointer-based execution model that is distinct from pure dataflow.

\paragraph{Esoteric and unconventional languages.}
The esolang community \citep{esolangs} has produced numerous languages with unusual control-flow and spatial properties, including \textbf{Malbolge} \citep{malbolge}, \textbf{Whitespace} \citep{whitespace}, and many others. Grid Programs belong to this broader tradition of exploring the design space of computation, but are distinguished by their formal treatment and proof of Turing completeness.

\section{Discussion}
\label{sec:discussion}

\subsection{Expressiveness and Programming Style}

Grid Programs encourage a distinctly spatial programming style. Loops become visible as U-shaped or rectangular detours in the grid; conditionals appear as T-intersections where the IP can branch left or right. Because there are no variable names, all data manipulation must be done through stack operations and CDLL pointer moves. This enforces a discipline similar to concatenative programming \citep{von2011concatenative}, in which the meaning of a program fragment is its effect on a shared implicit state.

The absence of syntax rules is both liberating and challenging. Any grid is a valid program, which means there is no syntax error to catch at parse time—only semantic misbehavior at run time. This mirrors assembly language and bytecode formats, which similarly have no syntactic grammar beyond instruction encoding. For programs intended to be human-authored, the designer must impose their own conventions.

\subsection{Design Variants and Extensions}

Several natural extensions of the base model merit investigation:

\begin{enumerate}
  \item \textbf{Multiple instruction pointers.} Allowing several IPs to move simultaneously, potentially interacting through shared CDLL state, would support a natural model of concurrency.
  \item \textbf{Three-dimensional grids.} Extending $\Dom \subseteq \ZZ^3$ and allowing six directions of travel would further increase the expressiveness of spatial control flow.
  \item \textbf{Self-modifying grids.} If the program can write new instructions into cells of $\Dom$ during execution (cf.\ Befunge's \texttt{p}/\texttt{g} instructions), the model gains reflective capabilities.
  \item \textbf{Typed values.} A type system for the data stack and CDLL would enable static analysis of Grid Programs despite the lack of variable names.
  \item \textbf{Sub-programs (gadgets).} A subroutine mechanism allowing one grid program to call another—analogous to procedure calls—would support modular programming at the grid level.
\end{enumerate}

\subsection{Open Problems}

\begin{enumerate}
  \item \textbf{Instruction set optimization.} The considered instruction set $\Inst$ may be optimised, both by inserting or removing some instructions, or changing their semantics. The goal would be to obtain shorter, easier-to-understand programs.
  \item \textbf{Minimal instruction sets.} What is the smallest subset of $\Inst$ that preserves Turing completeness? In particular, can \instr{K} be removed without loss (since function calls can be simulated by inlining), or does it provide an exponential succinctness advantage?
  \item \textbf{Call depth vs.\ grid size.} What is the maximum call/nesting depth achievable by a Grid Program whose domain fits in an $n \times n$ bounding box?
  \item \textbf{Spatial complexity.} Can every computable function be computed by a Grid Program whose domain fits within a strip of width $w$ for some fixed $w$?
  \item \textbf{Compilation.} Can Grid Programs be efficiently compiled to conventional instruction-set architectures, and what is the overhead of simulating the CDLL and spatial control flow?
\end{enumerate}

\section{Conclusion}
\label{sec:conclusion}

We have introduced Grid Programs, a new model of computation characterised by a planar program structure, the absence of syntax constraints, and the absence of named variables. The instruction pointer navigates a finite two-dimensional grid; control flow is encoded as directional changes in the IP's path; data is managed through a stack and a circularly doubly linked list. We have given a formal operational semantics, worked through five detailed examples (absolute value, factorial, while-loop sum, linear search, string reversal), proved Turing completeness via simulation of 2-counter machines, surveyed related work, and identified open problems.

Potential applications of this model include visual teaching of computer programming, hardware implementations that exploit the two-dimensional connectivity of the programs, and source code analysis and generation by deep learning systems. The grid nature of the programs might be amenable to their processing by deep learning models that specialize on two dimensional data.

Grid Programs occupy a unique position in the landscape of computational models: they are as expressive as any Turing-complete formalism, yet their structure is radically different from that of linear, variable-based programming. We hope that this formalism will stimulate further research into two-dimensional computation, spatial programming metaphors, and the fundamental question of what a program can look like.

\bibliographystyle{unsrtnat}
\bibliography{references}

\end{document}